\documentclass[11pt]{article}

\usepackage[T1]{fontenc}
\usepackage[utf8]{inputenc}
\IfFileExists{lmodern.sty}{\usepackage{lmodern}}{\usepackage{ae,aecompl}}

\usepackage[expansion=false]{microtype}
\usepackage[margin=1.15in]{geometry}
\usepackage{amsmath,amssymb,amsthm}
\usepackage{mathtools}
\usepackage{booktabs}
\usepackage{array}
\usepackage{enumitem}
\usepackage{listings}
\usepackage{xcolor}
\usepackage[colorlinks=true,linkcolor=blue!55!black,citecolor=blue!55!black,urlcolor=blue!55!black]{hyperref}
\usepackage[capitalize,nameinlink]{cleveref}

\newtheorem{theorem}{Theorem}[section]
\newtheorem{lemma}[theorem]{Lemma}

\newtheorem{conjecture}[theorem]{Conjecture}
\theoremstyle{definition}

\theoremstyle{remark}

\definecolor{leankw}{RGB}{0,32,128}
\definecolor{leancm}{RGB}{96,96,96}
\definecolor{leanbg}{RGB}{248,248,246}
\lstdefinelanguage{lean}{
  keywords={theorem,lemma,def,abbrev,structure,instance,namespace,end,import,
    open,by,fun,match,with,if,then,else,let,intro,exact,have,show,set_option,in,
    where,deriving,do,return,forall,exists},
  sensitive=true,
  comment=[l]{--},
  morecomment=[s]{/-}{-/},
  string=[b]",
}
\newcommand{\NN}{\mathbb{N}}
\newcommand{\ZZ}{\mathbb{Z}}
\newcommand{\QQ}{\mathbb{Q}}
\newcommand{\code}[1]{\texttt{#1}}
\newcommand{\sigmaone}{\sigma_1}

\title{\bfseries Kernel-Checked Exclusions for the\\
Erd\H{o}s--Selfridge Odd Covering Problem:\\
Any Odd Covering of $\ZZ$ Has lcm Exceeding $10000$}

\author{Ibrahim Mian \and Shayaan Siddique\\[2pt]
  \normalsize Millennium Research\\
  \normalsize\texttt{\{ibby,shayaan\}@millenniumresearch.ai}\\
  \normalsize\texttt{ibrahimnmian@gmail.com}, \texttt{shayaansiddique02@gmail.com}}

\date{July 2026}

\begin{document}
\maketitle

\begin{abstract}
The Erd\H{o}s--Selfridge odd covering problem (Erd\H{o}s problem \#7) asks
whether a covering system of $\ZZ$ exists whose moduli are all odd, distinct,
and greater than $1$. The problem is open. We present a Lean~4 formalization,
checked end to end by the proof kernel, of the exclusion: any covering of
$\ZZ$ by finitely many congruence classes with distinct odd moduli $> 1$ has
lcm of the moduli exceeding $10000$. The proof composes a formalized density
argument (a covering by divisors of $N$ exceeding $1$ forces
$2N \le \sigmaone(N)$, so the lcm is abundant or perfect), a kernel-checked
abundancy floor (no odd $N < 945$ qualifies), a family of Chinese-Remainder
capacity certificates --- decidable per-$N$ arithmetic inequalities each
refuting every covering with distinct moduli $> 1$ dividing that $N$ --- for
all $23$ odd abundant numbers below $10^4$, and a kernel-checked enumeration
establishing that those $23$ are the only odd non-deficient candidates. The
result is transported to the official \code{StrictCoveringSystem} $\ZZ$
formulation of Erd\H{o}s \#7 in \code{google-deepmind/formal-conjectures},
with a bidirectional periodicity bridge between coverings of $\ZZ$ and finite
checks over $\ZZ/N\ZZ$ suitable for consuming future SAT-style search output.
All $63$ published theorems depend on exactly $\{\code{propext},
\code{Classical.choice}, \code{Quot.sound}\}$: no \code{sorry}, no
\code{native\_decide}, no solver in the trusted base. The mathematical content
is known --- the density argument is folklore, and far larger uncertified
classifications of covering numbers exist --- so the contribution is epistemic
rather than mathematical: these exclusions are theorems of the Lean kernel,
with an axiom gate enforced mechanically in continuous integration.
\end{abstract}

% ===========================================================================
\section{Introduction}\label{sec:intro}

A \emph{covering system} of the integers, introduced by Erd\H{o}s
\cite{Erdos1950}, is a finite family of congruence classes $a_i \pmod{n_i}$
whose union is all of $\ZZ$. The classic example with distinct moduli greater
than $1$ is
\[
0 \pmod 2,\quad 0 \pmod 3,\quad 1 \pmod 4,\quad 5 \pmod 6,\quad 7 \pmod{12},
\]
of period $12$. Every known covering system with distinct moduli $> 1$ uses
at least one even modulus, and Erd\H{o}s and Selfridge famously asked whether
that is forced:

\begin{conjecture}[Erd\H{o}s--Selfridge odd covering problem; Erd\H{o}s \#7
\cite{ErdosProblems7}]
Does there exist a covering system of $\ZZ$ whose moduli are odd, distinct,
and greater than $1$?
\end{conjecture}

The problem is open in both directions and has substantial connections
elsewhere in number theory; Schinzel \cite{Schinzel1967} showed that an odd
covering system would have consequences for the reducibility of trinomials.
Deep partial results exist: Hough's resolution of the minimum modulus problem
\cite{Hough2015} bounds the least modulus of any covering system with distinct
moduli, Balister, Bollob\'as, Morris, Sahasrabudhe, and Tiba sharpened the
method dramatically \cite{BBMST2022a} and proved that no covering system with
distinct moduli has all moduli odd and squarefree \cite{BBMST2022b}, and McNew
and Setty \cite{McNewSetty2025} classified \emph{covering numbers} --- the $N$
for which some covering system uses distinct moduli $> 1$ dividing $N$ --- up
to $10^6$, using an optimization-solver pipeline.

This paper reports a formalization, in Lean~4 \cite{Lean4} over
\code{mathlib} \cite{mathlib}, of the elementary exclusion tier of this
landscape, carried out so that nothing in the evidence chain rests on
unverified computation or unformalized literature. The headline theorem,
stated exactly as in the development, is:

\begin{lstlisting}
/-- Any covering of ℤ by finitely many congruence classes with distinct
    odd moduli > 1 has lcm > 10000. -/
theorem odd_covering_lcm_gt_10000
    {ι : Type} [Fintype ι] (n : ι → ℕ) (a : ι → ℤ)
    (hgt : ∀ i, 1 < n i) (hodd : ∀ i, Odd (n i))
    (hinj : Function.Injective n)
    (hcov : ∀ x : ℤ, ∃ i, (n i : ℤ) ∣ (x - a i)) :
    10000 < Finset.univ.lcm n
\end{lstlisting}

together with its transport to the official formal statement of Erd\H{o}s \#7
maintained in \code{google-deepmind/formal-conjectures}
\cite{FormalConjectures} (\cref{sec:bridge}). Every published theorem in the
development --- $63$ in a curated manifest, and every theorem of every module
in an automated audit --- depends on exactly the three axioms of Lean's
standard classical foundation, $\{\code{propext}, \code{Classical.choice},
\code{Quot.sound}\}$, with no \code{sorry} and no \code{native\_decide}.

\paragraph{What is new here, and what is not.}
We state this as plainly as the repository's README does. The mathematical
content is known: the density/abundancy argument is folklore, and
McNew--Setty's solver-based classification reaches covering numbers up to
$10^6$, far beyond our range. The contribution is epistemic, not mathematical:
these exclusions are theorems of the Lean kernel, with no solver in the
trusted base and no appeal to unformalized literature. This is not progress on
the open question. What we hope it contributes methodologically is (i) a
formal statement of the density and Chinese-Remainder capacity machinery for
covering systems, reusable for larger certified classifications; (ii) a worked
pattern for turning per-$N$ exclusions into decidable certificates discharged
by \code{decide}; and (iii) a bidirectional bridge between coverings of $\ZZ$
and finite checks over $\ZZ/N\ZZ$ shaped exactly like the output of a SAT or
exhaustive search, so that future machine searches --- in either direction ---
can be consumed by the kernel.

\paragraph{Contributions.}
\begin{enumerate}[leftmargin=2em]
\item The headline exclusion above, plus the intermediate rungs
$\mathrm{lcm} \ge 945$ and $\mathrm{lcm} > 945$, all kernel-checked
(\cref{sec:proof}).
\item A formalized \emph{capacity certificate}: a decidable per-$N$
inequality, derived from a CRT counting bound, that refutes every covering
with distinct moduli $> 1$ dividing $N$ --- with a proved monotone-relaxation
lemma making one certificate over a full coprime family apply to every
subfamily a covering might actually use (\cref{sec:capacity}).
\item A kernel-checked enumeration that the $23$ odd abundant numbers below
$10^4$ are the only odd non-deficient candidates, via a constant-fuel
paired-divisor $\sigmaone$ evaluator engineered for kernel reduction
(\cref{sec:enumeration}).
\item The transport of the exclusion to the official
\code{StrictCoveringSystem} vocabulary over $\ZZ$ of the
\code{formal-conjectures} repository, including the completeness direction ---
every abstract odd strict covering system arises from concrete data of the
enumerable form --- so the exclusions rule out \emph{their} statement, not
merely concretely presented families (\cref{sec:bridge}).
\item An axiom-gate methodology --- curated manifest, automated whole-library
audit, continuous integration --- together with anti-vacuity and soundness
controls: the classic period-$12$ covering is checked to satisfy every
hypothesis, and the capacity arithmetic is checked to \emph{fail} at $N = 12$,
where a certificate firing would mean the bound is unsound
(\cref{sec:engineering}).
\end{enumerate}

All sources are public under the Apache 2.0 license at
\begin{center}
\url{https://github.com/ibrahimmian36/centurion}
\end{center}

% ===========================================================================
\section{Background and related work}\label{sec:background}

\subsection{Covering systems and the odd covering problem}

Erd\H{o}s introduced covering systems in 1950 to exhibit an arithmetic
progression of odd numbers not of the form $2^k + p$ \cite{Erdos1950}. The odd
covering problem appears throughout the problem literature
\cite{GuyF13,ErdosProblems7}; the folklore first step is that if every modulus
of a covering system divides $N$ and exceeds $1$, then summing class densities
gives $\sum 1/n_i \ge 1$, whence $\sigmaone(N)/N - 1 \ge 1$, i.e.\ $N$ is
perfect or abundant. Since an odd covering forces its lcm to be odd, and the
smallest odd abundant number is $945$, an odd covering has
$\mathrm{lcm} \ge 945$ --- the floor our step 3 certifies. Sun and
collaborators, among others, have obtained constraints on hypothetical odd
coverings; on the structural side, the modern breakthrough line of Hough
\cite{Hough2015} and Balister--Bollob\'as--Morris--Sahasrabudhe--Tiba
\cite{BBMST2022a,BBMST2022b} rules out, in particular, odd squarefree
coverings with distinct moduli. The general odd case remains open, and remains
open here: nothing in this development approaches the question itself.

\subsection{Covering numbers}

Call $N$ a \emph{covering number} if some covering system uses distinct moduli
$> 1$ all dividing $N$. The density argument shows covering numbers are
non-deficient; the converse fails, and classifying which non-deficient $N$ are
covering numbers is subtle. McNew and Setty \cite{McNewSetty2025} determined
the covering numbers up to $10^6$ (and studied their density), using a
Gurobi-based search pipeline. Our capacity certificates (\cref{sec:capacity})
are a formalized, kernel-decidable \emph{sufficient} condition for
non-covering; they dispose of all $23$ odd abundant $N < 10^4$ but, as we
verify inside Lean, not of the first three unexcluded odd abundant numbers
($10395$, $12285$, $17325$), marking the honest limit of the method as
formalized (\cref{sec:limits}).

\subsection{Formalization context}

Folding decidable computation into kernel-checked proofs is a standard device
in interactive theorem proving, from the Four-Color Theorem
\cite{Gonthier2008} onward. Two Lean-specific constraints shape this
development: kernel reduction does not unfold well-founded recursion, so
computations discharged by \code{decide} must be written in structural (fuel)
recursion; and \code{native\_decide}, which would be orders of magnitude
faster, is excluded because each use adds a native-evaluation trust axiom,
enlarging the trusted base from the kernel to the compiler toolchain. The
formal statement of Erd\H{o}s \#7 that we target lives in the
\code{formal-conjectures} repository \cite{FormalConjectures}; its
\code{CoveringSystem}/\code{StrictCoveringSystem} structures are not in
\code{mathlib}, so the development mirrors them verbatim (field-for-field
against upstream \code{main} at commit \code{81e700d16ada}) in order to state
the transport.

% ===========================================================================
\section{Formal setting}\label{sec:setting}

Throughout, a \emph{concrete covering family} is a finite index type $\iota$,
moduli $n : \iota \to \NN$, and residues $a : \iota \to \ZZ$, with the
covering condition
\[
\forall x \in \ZZ,\ \exists i,\ n_i \mid (x - a_i).
\]
The hypotheses of interest are $\forall i,\ 1 < n_i$ (nondegeneracy),
$\forall i,\ n_i$ odd, and injectivity of $n$ (distinctness). The quantity
bounded is $L \coloneqq \operatorname{lcm}_i n_i$, computed as
\code{Finset.univ.lcm n}. We write $\sigmaone(N) = \sum_{d \mid N} d$.

The development builds with Lean 4.30.0 against the pinned \code{mathlib} of
the same series; \code{lake build Erdos7} takes about ten minutes on an
ordinary machine, dominated by two closed kernel computations discussed below.
The axiom gate (\code{scripts/axiom\_gate.sh}) is run by CI on every push and
fails on any axiom beyond the standard three, any \code{sorryAx}, or any
\code{\_native.*} constant.

% ===========================================================================
\section{The proof}\label{sec:proof}

The argument proceeds in five steps: density, reduction to the lcm, the
abundancy floor, capacity exclusion of each candidate, and the enumeration
that closes the candidate list.

\subsection{Step 1: the density lemma}\label{sec:density}

Working first over one period: a covering class $(d, r)$ with $d \mid N$ meets
$[0, N)$ in at most $N/d$ points (\code{card\_filter\_mod\_le}, by injectivity
of $x \mapsto x/d$ on the class). Summing over a finite set $S$ of classes
covering $[0, N)$ gives the density lemma, with its rational and $\ZZ/N\ZZ$
phrasings:

\begin{lemma}[\code{covering\_density},
\code{covering\_density\_rat}]\label{lem:density}
If the classes $(d_i, r_i) \in S$, each with $d_i \mid N$, cover every
$x < N$, then $N \le \sum_i N/d_i$; for $N > 0$, equivalently
$1 \le \sum_i 1/d_i$ over $\QQ$.
\end{lemma}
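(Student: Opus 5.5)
The plan is a double-counting argument over one period $[0,N)$, with everything reduced to the per-class counting bound \code{card\_filter\_mod\_le} stated just before the lemma. Write $A = \{0,1,\dots,N-1\}$ and, for each class $(d_i,r_i)\in S$, let $A_i = \{x \in A : x \equiv r_i \pmod{d_i}\}$. The covering hypothesis says $A = \bigcup_i A_i$. Subadditivity of cardinality over a finite union then gives
\[
N = |A| \le \sum_i |A_i|.
\]
In Lean this is \code{Finset.card\_biUnion\_le} applied after rewriting $A$ as the \code{biUnion} of the filters. Equivalently, it suffices to show $A \subseteq \bigcup_i A_i$, since the reverse inclusion is trivial and only the inequality is needed.

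The second step bounds each term by $|A_i| \le N/d_i$. This is exactly \code{card\_filter\_mod\_le}. The map $x \mapsto x / d_i$ (natural division) is injective on $A_i$, because two elements of the same residue class mod $d_i$ with the same quotient are equal. It also sends $A_i$ into $\{0,\dots,N/d_i - 1\}$: here $d_i \mid N$ gives $x/d_i < N/d_i$ for $x<N$. This is the one place the divisibility hypothesis $d_i \mid N$ is used, and it is what makes $N/d_i$ exact rather than a ceiling. Summing gives $N \le \sum_i N/d_i$, the first claim.

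For the rational phrasing with $N>0$, I would cast the natural-number inequality to $\QQ$. Since $d_i \mid N$, the cast of the natural quotient $N/d_i$ equals the rational $N/d_i$ (\code{Nat.cast\_div} needs $d_i \mid N$ and $d_i \neq 0$; $d_i \ne 0$ follows from $d_i \mid N$ and $N>0$). Factoring $N$ out of the sum and dividing by $N>0$ yields $1 \le \sum_i 1/d_i$. The converse direction of the ``equivalently'' follows by multiplying back by $N$.

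I expect the only real friction to be bookkeeping rather than mathematics. The main points are the coercions between $\NN$, $\ZZ$ (if residues $r_i$ are integers, membership must be stated via $\ZZ$-congruence or via $r_i \bmod d_i$) and $\QQ$, and the exact-division cast. If residues are integers, I would first normalize each $r_i$ to $r_i \bmod d_i \in [0,d_i)$ so the injectivity argument runs entirely in $\NN$.
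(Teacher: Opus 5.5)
Your proposal is correct and matches the paper's route: a union bound over the classes covering $[0,N)$, the per-class count $|A_i| \le N/d_i$ from \code{card\_filter\_mod\_le} via injectivity of $x \mapsto x/d_i$ (which is where $d_i \mid N$ is used), and then the exact-division cast to $\QQ$ for the rational form when $N>0$.
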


Charging distinct moduli $> 1$ to distinct divisors of $N$ other than $1$, and
using $\sum_{d \mid N} N/d = \sigmaone(N)$, yields the targeting lemma that
directs the whole search:

\begin{lemma}[\code{sum\_divisors\_ge\_of\_covering},
\code{not\_deficient\_of\_covering}]\label{lem:targeting}
A covering of $[0, N)$ by classes with distinct moduli dividing $N$, each
$> 1$, forces $2N \le \sigmaone(N)$: $N$ is perfect or abundant.
\end{lemma}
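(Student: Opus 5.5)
We derive \cref{lem:targeting} from \cref{lem:density} by bounding the density sum by a sum over divisors, and then using the divisor involution $d \mapsto N/d$.

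\textbf{Step 1: apply the density lemma.} Let the covering be given by classes $(d_i, r_i)$ for $i$ in a finite index set. Each $d_i \mid N$ and $d_i > 1$, and the $d_i$ are pairwise distinct. These classes cover every $x < N$, so \cref{lem:density} gives
\[
N \le \sum_i N/d_i .
\]
(If $N = 0$ the conclusion $0 \le \sigmaone(0)$ is trivial, so we may assume $N > 0$.)

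\textbf{Step 2: reindex by divisors.} Because $i \mapsto d_i$ is injective, we reindex the sum by the image set $D = \{d_i\}$. This set satisfies $D \subseteq \operatorname{divisors}(N) \setminus \{1\}$. Every term $N/d$ is nonnegative, so enlarging the index set can only increase the sum:
\[
\sum_i N/d_i = \sum_{d \in D} N/d \le \sum_{d \mid N,\, d \ne 1} N/d = \Bigl(\sum_{d \mid N} N/d\Bigr) - N .
\]
The last equality comes from removing the term for $d = 1$, which contributes $N/1 = N$.

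\textbf{Step 3: identify the divisor sum.} The map $d \mapsto N/d$ is an involution on the divisors of $N$, so $\sum_{d \mid N} N/d = \sigmaone(N)$. In Lean this is mathlib's \code{Nat.sum\_div\_divisors} or the bijection \code{Nat.divisors\_antidiagonal} swap. Combining the three steps gives $N \le \sigmaone(N) - N$, which is $2N \le \sigmaone(N)$.

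\textbf{Main obstacle.} I expect the hardest part to be the bookkeeping in Step 2 rather than any of the mathematics. Formally, one converts the sum over the index type into a sum over the image finset via \code{Finset.sum\_image} (using injectivity), and then applies \code{Finset.sum\_le\_sum\_of\_subset\_of\_nonneg}. Natural-number subtraction also needs care. To avoid truncated subtraction, I would instead prove $N + \sum_{d \in D} N/d \le \sigmaone(N)$, by inserting the divisor $1 \notin D$ into $D$ and bounding the sum over $\operatorname{insert} 1\, D \subseteq \operatorname{divisors}(N)$.
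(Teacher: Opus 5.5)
Your proposal is correct and follows the paper's own route. You apply the density lemma to get $N \le \sum_i N/d_i$, charge the distinct moduli $>1$ injectively to divisors of $N$ other than $1$, and use $\sum_{d \mid N} N/d = \sigmaone(N)$ to conclude $2N \le \sigmaone(N)$. Inserting the divisor $1$ to sidestep truncated subtraction over $\NN$ is a sound way to handle the bookkeeping.
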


\subsection{Step 2: reduction to the lcm}\label{sec:reduction}

For a concrete covering family of $\ZZ$, set $L = \operatorname{lcm}_i n_i$
and reduce residues modulo their moduli; \code{covering\_density\_ge\_one}
lifts \cref{lem:density} to $1 \le \sum_i 1/n_i$ directly from the
$\ZZ$-covering hypothesis. Since each $n_i$ is a divisor $> 1$ of $L$ and $n$
is injective, \cref{lem:targeting} applies with $N = L$: the lcm of any
covering with distinct moduli $> 1$ is perfect or abundant. If moreover every
$n_i$ is odd, then $L$ divides the odd product $\prod_i n_i$ and is therefore
odd.

\subsection{Step 3: the abundancy floor}\label{sec:floor}

The smallest odd abundant number is $945$; below it, every odd number is
deficient. This is one closed kernel computation:

\begin{lstlisting}
theorem abundancy_floor_945 :
    ∀ n < 945, n % 2 = 1 →
      Σ d ∈ Nat.divisors n, d < 2 * n := by
  decide +kernel
\end{lstlisting}

costing about $80$ seconds of kernel time (the elaborator's default heartbeat
budget must be raised; the check is genuine kernel work, not elaboration).
Composing steps 1--3 gives the first headline rung,
\code{odd\_covering\_lcm\_ge\_945}: any covering of $\ZZ$ by distinct odd
moduli $> 1$ has $\mathrm{lcm} \ge 945$.

\subsection{Step 4: capacity certificates}\label{sec:capacity}

The density lemma charges each class only its raw density $1/d$ and cannot see
overlap. The capacity bound charges the Chinese-Remainder--forced overlap
among classes with pairwise-coprime moduli, and it is what eliminates each
individual candidate $N$.

\begin{lemma}[\code{uncovered\_card\_ge}]\label{lem:capacity}
Let $U$ be a set of classes $(d_i, r_i)$ with pairwise-coprime moduli
$d_i > 1$, all dividing $N$. Then the classes of $U$ leave at least
\[
\frac{N}{\prod_i d_i} \cdot \prod_i (d_i - 1)
\]
points of $[0, N)$ uncovered.
\end{lemma}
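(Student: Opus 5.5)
Let $D = \prod_i d_i$. Since the $d_i$ are pairwise coprime and each divides $N$, we have $D \mid N$, so $N = D\cdot (N/D)$ with $N/D$ an integer. We reduce to a single block of length $D$. The condition ``$x$ is uncovered by $U$'' depends only on $x \bmod d_i$ for each $i$, hence only on $x \bmod D$. Therefore the uncovered set in $[0,N)$ is the disjoint union of $N/D$ translates of the uncovered set in $[0,D)$: for $k < N/D$, $x = kD + y$ with $y < D$ is uncovered iff $y$ is. It thus suffices to show that $[0,D)$ contains exactly (in particular at least) $\prod_i (d_i - 1)$ uncovered points, and then multiply by $N/D$.

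For the count in $[0,D)$ we use the Chinese Remainder Theorem. The map $y \mapsto (y \bmod d_i)_i$ is a bijection from $[0,D)$ onto $\prod_i [0,d_i)$, by pairwise coprimality (in Lean, this is the \code{ZMod} ring isomorphism $\ZZ/D\ZZ \cong \prod_i \ZZ/d_i\ZZ$, or an induction on $U$ using \code{Nat.chineseRemainder}). A point $y$ is uncovered iff $y \bmod d_i \ne r_i \bmod d_i$ for every $i$. Under the bijection this is the box $\prod_i \bigl([0,d_i)\setminus\{r_i \bmod d_i\}\bigr)$, whose cardinality is $\prod_i (d_i-1)$. The hypothesis $d_i > 1$ is needed only so that the formula is meaningful and nontrivial; it is not otherwise used in the count.

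We would formalize this by induction on the finset $U$ rather than through a global CRT isomorphism. Inserting a class $(d, r)$ with $d$ coprime to the current product $D'$ replaces the uncovered residues modulo $D'$ by pairs $(y \bmod D', y \bmod d)$, and the fiber count picks up a factor $d-1$. This keeps everything in \code{Finset.card} arithmetic over $\NN$. We state the conclusion multiplicatively, as $N \cdot \prod_i (d_i-1) \le D \cdot \#\{\text{uncovered}\}$, to avoid division.

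We expect the main obstacle to be this inductive CRT step: establishing the cardinality identity for the filter on $[0, D'd)$ in terms of the filter on $[0,D')$. It requires the bijection $[0,D'd) \cong [0,D')\times[0,d)$ given by the two residues, together with pairwise coprimality of the product with the new modulus (\code{Nat.Coprime.prod\_left}). The periodicity reduction from $N$ to $D$ is routine by comparison.
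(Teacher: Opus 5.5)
Your proposal is correct and follows the paper's argument: CRT gives $\prod_i(d_i-1)$ avoiding residues modulo $D=\prod_i d_i$, and each recurs $N/D$ times in $[0,N)$. The difference is only in the formalization. The paper builds a single injection $(\text{choice},\text{block})\mapsto \mathrm{crt}(\text{choice})+D\cdot\text{block}$ via \code{Nat.chineseRemainderOfFinset}, so it proves only the lower bound; you instead plan an exact count by induction on $U$, using a two-modulus CRT bijection at each step.
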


\begin{proof}[Proof sketch]
By CRT there are $\prod_i (d_i - 1)$ simultaneous avoiding residues modulo
$\prod_i d_i$, each recurring $N / \prod_i d_i$ times in $[0, N)$. The formal
proof constructs the injection $(\text{choice}, \text{block}) \mapsto
\mathrm{crt}(\text{choice}) + (\prod_i d_i) \cdot \text{block}$ from the
product of the per-modulus avoiding-residue sets with a block range into the
uncovered set, using \code{mathlib}'s \code{Nat.chineseRemainderOfFinset}. (A
small foundational detour: \code{mathlib}'s ring-theoretic
\code{Finset.prod\_dvd\_of\_coprime} degenerates over $\NN$, so
pairwise-coprime divisibility of the product is reproved directly.)
\end{proof}

A covering must patch those uncovered points using its other moduli, each
contributing at most $N/d$ points by \code{card\_filter\_mod\_le}. If even the
full remaining divisor budget is too small, no covering exists. Crucially, a
certificate is stated over a \emph{fixed} coprime family $T$, but a covering
is adversarial and may use only part of $T$; a monotone-relaxation lemma
(\code{capacity\_prod\_relax}: shrinking the coprime family only raises the
uncovered-count bound) closes exactly that gap. The result is an
adversary-free, decidable certificate:

\begin{theorem}[\code{capacity\_exclusion}]\label{thm:capacity}
Fix $N > 0$ and a pairwise-coprime family $T$ of divisors $> 1$ of $N$. If
\begin{equation}\label{eq:capacity}
\sum_{\substack{d \mid N,\ d > 1,\ d \notin T}} \frac{N}{d}
\;<\;
\frac{N}{\prod_{d \in T} d} \cdot \prod_{d \in T} (d - 1),
\end{equation}
then no system of congruence classes with distinct moduli $> 1$ all dividing
$N$ covers $[0, N)$.
\end{theorem}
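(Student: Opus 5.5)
We argue by contradiction: suppose a system $S$ of classes $(d_i, r_i)$ with distinct moduli $d_i > 1$, all dividing $N$, covers $[0, N)$. We split $S$ by modulus into $S_T$, the classes whose modulus lies in $T$, and $S_{\bar T}$, the rest. Because moduli are distinct, each $d \in T$ is used by at most one class of $S$. So $S_T$ is a set of classes with pairwise-coprime moduli indexed by a subfamily $T' \subseteq T$. Likewise, each $d \mid N$ with $d > 1$ and $d \notin T$ is used at most once in $S_{\bar T}$.

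First we lower-bound the points left by $S_T$. By \cref{lem:capacity} applied to $S_T$, at least
\[
\frac{N}{\prod_{d \in T'} d}\prod_{d \in T'}(d-1)
\]
points of $[0,N)$ are not covered by $S_T$; call this set $V$. We then invoke the monotone-relaxation lemma \code{capacity\_prod\_relax} to replace $T'$ by the full family $T$. Dropping a factor $d$ multiplies the quantity by $d/(d-1) \ge 1$, and every divisibility needed for integrality (namely $\prod_{d\in T'} d \mid \prod_{d\in T} d \mid N$, by pairwise coprimality) is available. Hence $|V|$ is at least the right-hand side of \eqref{eq:capacity}.

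Next we upper-bound what $S_{\bar T}$ can cover. Since $S$ covers $[0,N)$, every point of $V$ lies in some class of $S_{\bar T}$, so $V \subseteq \bigcup_{(d,r)\in S_{\bar T}} \{x<N : x \equiv r \pmod d\}$. By \code{card\_filter\_mod\_le}, each such class meets $[0,N)$ in at most $N/d$ points. Because the moduli in $S_{\bar T}$ are distinct divisors of $N$ exceeding $1$ and lying outside $T$, the map $(d,r)\mapsto d$ is an injection from $S_{\bar T}$ into $\{d \mid N : d>1,\ d\notin T\}$. A union bound together with a sum-over-subset inequality for nonnegative terms then gives $|V| \le \sum_{d\mid N,\,d>1,\,d\notin T} N/d$, which contradicts \eqref{eq:capacity}.

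We expect the main obstacle to be bookkeeping rather than mathematics. The step needing most care is the passage from $S$ to the subfamily $T'$ and the relaxation to $T$, carried out in natural-number arithmetic with exact divisions. We would first state the relaxation multiplicatively, as $\prod_{d\in T\setminus T'} d \cdot (\text{bound for }T) \le \prod_{d\in T\setminus T'}(d-1)\cdot\ldots$, proved by induction on $T\setminus T'$ with cross-multiplied inequalities, and only then divide. A secondary point is turning the covering of $[0,N)$ restricted to $V$ into the union bound. This is cleanest as a cardinality argument: $V$ injects into the disjoint sum $\Sigma_{(d,r)\in S_{\bar T}}(\text{class}\cap[0,N))$ by choosing a covering class for each point.
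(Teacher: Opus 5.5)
Your proof is correct and follows the paper's route. You split the covering by whether each modulus lies in $T$, apply the CRT uncovered-count bound (\code{uncovered\_card\_ge}) to the subfamily $T' \subseteq T$ actually used, lift it to the full family $T$ via the monotone relaxation \code{capacity\_prod\_relax}, and bound what the remaining distinct moduli can cover by the divisor budget using \code{card\_filter\_mod\_le}, exactly as the paper does.
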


Hypothesis \eqref{eq:capacity} is a closed arithmetic inequality, so each
instance discharges by \code{decide}. The $\ZZ$-level form
(\code{capacity\_exclusion\_int}) concludes that a covering of $\ZZ$ with
distinct moduli $> 1$ cannot have lcm dividing any $N$ carrying a certificate;
note oddness is not needed for the exclusions themselves --- it enters only
when composing with the abundancy floor. All $23$ odd abundant $N < 10^4$
carry certificates (\cref{tab:instances} lists each $N$ with its family $T$;
the prime family $\{3, 5, 7\}$ or a four-prime variant suffices in every
case), giving \code{covering\_lcm\_notMem\_oddAbundantBelow10000} and,
refuting equality at $945$, the strict rung
\code{odd\_covering\_lcm\_gt\_945}.

\paragraph{Controls.}
Two \code{decide}-checked controls guard the certificate's meaning. As a
\emph{soundness control}, the arithmetic \eqref{eq:capacity} is verified to
fail at $N = 12$ for both maximal coprime families --- it must, since $12$
hosts the classic covering, and a certificate firing there would mean the
bound is unsound. As a \emph{limit control}, the first straggler
$10395 = 3^3 \cdot 5 \cdot 7 \cdot 11$ is verified not to be closed by the
bound at its prime family (and, having only four distinct primes, it offers no
better family); the stragglers $10395$, $12285$, $17325$ remain open in this
development.

\subsection{Step 5: the enumeration}\label{sec:enumeration}

Step 4 excluded the members of a list; step 5 proves the list is complete:
every odd $L \le 10^4$ with $2L \le \sigmaone(L)$ is one of the $23$. The
naive route --- \code{decide} over \code{Nat.divisors} --- was measured
infeasible: the general-purpose divisor machinery does not reduce economically
in the kernel. Instead the development supplies a paired-divisor evaluator
with constant fuel:

\begin{lstlisting}
/-- Finset-free divisor sum for `n < 101^2`: the fuel is the CONSTANT
    `100`, so the recursion depth is uniform across the scan. -/
def sigma100 (n : ℕ) : ℕ := ...

theorem sigma100_eq_sigma (n : ℕ) (hn : 0 < n)
    (hbound : n < 101 * 101) :
    sigma100 n = Σ d ∈ n.divisors, d
\end{lstlisting}

\code{sigmaPairAux} scans $d \le 100$ and adds $d + n/d$ for each divisor hit
(halving when $d = n/d$), so every $n < 101^2$ is summed in at most $100$
structural steps; correctness (\code{sigmaPair\_eq\_sigma}) is proved against
\code{mathlib}'s $\sigmaone$ via the divisor-pairing bijection, including the
self-contained fact $n/(n/d) = d$ for $d \mid n$. The scanner

\begin{lstlisting}
def enumOk : ℕ → Bool
  | 0     => true
  | L + 1 =>
      (decide ((L + 1) % 2 = 0)
        || decide ((L + 1) ∈ oddAbundantList)
        || decide (sigma100 (L + 1) < 2 * (L + 1)))
      && enumOk L
\end{lstlisting}

orders its disjuncts by cost --- evens exit after one \code{\% 2}, list
members after at most $23$ comparisons, and only surviving odd numbers pay the
$\sigmaone$ computation --- and its soundness (\code{enumOk\_sound}) is proved
by induction on the bound. The single closed computation

\begin{lstlisting}
theorem enum_ok_10000 : enumOk 10000 = true := by decide
\end{lstlisting}

costs 1--2 minutes of kernel CPU (linear in the bound thanks to the
constant-depth evaluator) and yields the enumeration lemma
\code{odd\_abundant\_le\_10000\_mem}. The candidate list itself is produced by
exhaustive search outside Lean; the development is explicit that step 4
asserts only exclusions for its members, and that completeness is exactly this
step-5 lemma --- so no unverified list survives in the final composition.

\subsection{Composition}\label{sec:composition}

The headline proof is now a chain of the five steps: given a covering of $\ZZ$
by distinct odd moduli $> 1$ with $L = \operatorname{lcm}_i n_i \le 10^4$, the
density reduction makes $L$ odd and non-deficient (steps 1--2), the
enumeration pins $L$ to one of the $23$ candidates (step 5), and the capacity
certificates exclude every one (step 4) --- contradiction, so $L > 10^4$.

% ===========================================================================
\section{The bridge to formal-conjectures}\label{sec:bridge}

\subsection{The periodicity bridge}

A covering of $\ZZ$ is a statement about infinitely many integers; any machine
search operates over one period. The development proves the equivalence in
both directions, in the shapes searches actually produce:
\code{coversInt\_of\_coversZMod} (a finite witness over $\ZZ/N\ZZ$, checkable
by \code{decide}, lifts to a covering of all of $\ZZ$ --- the direction a
positive answer to Erd\H{o}s \#7 would travel), and
\code{coversZMod\_of\_coversInt} together with
\code{forall\_not\_coversInt\_of\_range} (if no assignment of residues below
their moduli covers $\ZZ/N\ZZ$ --- exactly the shape of a SAT UNSAT result or
exhausted enumeration --- then no covering of $\ZZ$ uses those moduli).
End-to-end smoke tests instantiate both: the classic period-$12$ system is
verified to cover $\ZZ$ from a single \code{decide} over $\ZZ/12\ZZ$, and the
fixture $\{0 \bmod 3,\ 0 \bmod 5\}$ is refuted from a \code{decide} over
$\ZZ/15\ZZ$.

\subsection{Transport to the official statement}

\code{formal-conjectures} states Erd\H{o}s \#7 over an ideal-theoretic
\code{StrictCoveringSystem R} structure (residues, ideal moduli, a
union-covers field, nondegeneracy, and injectivity of the moduli). Since that
structure is not in \code{mathlib}, the development mirrors it verbatim,
field-for-field against the pinned upstream commit, and proves the two
translation facts that make the vocabularies interchangeable over $\ZZ$:
pointwise-coset membership is divisibility (lemma
\code{mem\_coset\_iff\_dvd}), and their ideal-theoretic spelling of ``odd'',
$\lnot\, I \le (2)$, is the numeric one (lemma
\code{not\_le\_span\_two\_iff\_odd}).

Both directions of the transport are proved. \emph{Soundness for a positive
answer}: concrete odd, distinct, $> 1$ moduli covering $\ZZ$ package into
their structure, via \code{buildStrict} and
\code{exists\_strictCoveringSystem\_odd}. \emph{Completeness for a negative
answer} --- the direction needed to claim that finite exclusions rule out
their statement rather than merely concretely-presented families: every
abstract odd \code{StrictCoveringSystem} $\ZZ$ arises from concrete data of
the enumerable form. Because $\ZZ$ is a principal ideal domain, each modulus
ideal has a generator; taking natural absolute values recovers moduli $n_i$
with $1 < n_i$ (from $\ne \bot$, $\ne \top$), oddness, injectivity (from
injectivity of the ideals), and the covering property; this is theorem
\code{fc\_concrete\_of\_strictCoveringSystem} --- nothing deep, only fiddly,
as the source honestly remarks. Composing with the headline yields the
FC-level result:

\begin{lstlisting}
theorem fc_odd_strictCoveringSystem_lcm_gt_10000
    (C : StrictCoveringSystem ℤ)
    (hodd : ∀ i, ¬ C.moduli i ≤ Ideal.span {2}) :
    ∃ (k : ℕ) (a : Fin k → ℤ) (n : Fin k → ℕ),
      (∀ i, 1 < n i) ∧ (∀ i, Odd (n i)) ∧ Function.Injective n ∧
      Erdos7.CoversInt n a ∧ 10000 < Finset.univ.lcm n
\end{lstlisting}

% ===========================================================================
\section{Engineering and the trust story}\label{sec:engineering}

\paragraph{Kernel discipline.}
Both heavy computations --- the $945$ floor ($\approx 80$\,s) and the $10^4$
scan ($\approx 100$\,s) --- are single closed \code{decide}s checked by the
kernel. \code{native\_decide} is never used. The constant-fuel $\sigmaone$
evaluator exists precisely because the idiomatic \code{Nat.divisors} route
does not reduce affordably; writing the computation the kernel can evaluate,
then proving it equal to the idiomatic definition, is the pattern throughout.

\paragraph{The axiom gate.}
The manifest module \code{AxiomCheck} prints axioms for all $63$ published
theorems; the audit module \code{AxiomAudit} discovers every theorem of every
module from the compiled environment and re-checks its axiom closure, so
nothing can slip past the hand-kept list; the gate script
\code{axiom\_gate.sh} runs both and is executed by CI on every push, failing
on any axiom beyond the three, any \code{sorryAx}, or any \code{\_native.*}
constant. A green badge is thus a machine-checked claim about the whole
library, not a README assertion.

\paragraph{Anti-vacuity and controls.}
Hypothesis-rich exclusion theorems risk vacuity, and certificate bounds risk
unsoundness; the development guards both by \code{decide}-checked examples.
The classic period-$12$ covering satisfies every hypothesis of the density and
targeting lemmas (so none is vacuous), the targeting lemma correctly reports
$12$ abundant-or-perfect ($\sigmaone(12) = 28 \ge 24$), the capacity
arithmetic correctly fails at $12$, and the method's limit at $10395$ is
exhibited rather than elided.

\paragraph{Attribution.}
The \code{CoveringSystem}/\code{StrictCoveringSystem} structures are mirrored
verbatim from \code{formal-conjectures} (Apache-2.0, The Formal Conjectures
Authors), verified field-for-field against upstream \code{main} at commit
\code{81e700d16ada}; when compiled against their package the mirror is deleted
and theirs imported.

% ===========================================================================
\section{Limitations}\label{sec:limits}

Three limitations bound the claim. First, the mathematics is elementary and
known; nothing here constrains the open problem beyond what folklore already
did, and the deep structural results \cite{Hough2015,BBMST2022a,BBMST2022b}
operate on an entirely different level. Second, the capacity certificate as
formalized stalls at the four-prime-factor stragglers $10395$, $12285$,
$17325$ --- verified inside Lean, not merely observed --- so extending the
certified bound past $10^4$ by this route requires a sharper certificate
(natural candidates: charging prime-power towers rather than one prime per
coprime slot, or a formalized LP/SAT consumption path through the
\code{forall\_not\_coversInt\_of\_range} interface, which was designed for
exactly that). Third, the certified range is minuscule against
McNew--Setty's $10^6$ classification; closing that gap in certified form is
the natural continuation, and the bridge lemmas were built so that a
solver-produced UNSAT object could be replayed through the kernel rather than
trusted.

% ===========================================================================
\section{Conclusion}\label{sec:conclusion}

The odd covering problem will not be settled by exclusions at $10^4$. What
this development settles is smaller and, we think, still worth having: the
elementary exclusion tier of Erd\H{o}s \#7 now exists as kernel-checked
theorems with a three-axiom footprint, stated in and transported to the
community's formal vocabulary for the problem, with its computational content
written so the kernel itself evaluates it and its limits exhibited inside the
same development that proves its successes. The certificate pattern --- a
decidable per-$N$ inequality, an adversary-closing relaxation lemma, controls
at both a known-covering $N$ and a known-straggler $N$, and a mechanical axiom
gate --- is portable, and the $\ZZ \leftrightarrow \ZZ/N\ZZ$ bridge is an open
socket for future certified searches in either direction on the problem
itself.

\paragraph{Acknowledgements.}
We thank the maintainers of \code{mathlib} and of the
\code{formal-conjectures} repository, and T.~F.~Bloom for the Erd\H{o}s
problem catalogue. Development was carried out with the assistance of Claude
(Anthropic).

% ===========================================================================

\appendix

% ===========================================================================
\section{The 23 capacity instances}\label{app:instances}

\cref{tab:instances} lists the odd abundant numbers $N < 10^4$ (there are
$23$; none is perfect), each with its factorization, divisor sum, and the
pairwise-coprime family $T$ over which inequality \eqref{eq:capacity} is
discharged by \code{decide} in theorem \code{no\_covering\_lcm\_dvd\_N}. In
every case $T$ consists of the distinct primes of $N$ (three or four of them);
the monotone-relaxation lemma then covers whatever subfamily a hypothetical
covering would actually use.

\begin{table}[ht]
\centering\small
\begin{tabular}{r l r l}
\toprule
$N$ & factorization & $\sigmaone(N)$ & family $T$ \\
\midrule
$945$  & $3^3 \cdot 5 \cdot 7$            & $1920$  & $\{3,5,7\}$ \\
$1575$ & $3^2 \cdot 5^2 \cdot 7$          & $3224$  & $\{3,5,7\}$ \\
$2205$ & $3^2 \cdot 5 \cdot 7^2$          & $4446$  & $\{3,5,7\}$ \\
$2835$ & $3^4 \cdot 5 \cdot 7$            & $5808$  & $\{3,5,7\}$ \\
$3465$ & $3^2 \cdot 5 \cdot 7 \cdot 11$   & $7488$  & $\{3,5,7,11\}$ \\
$4095$ & $3^2 \cdot 5 \cdot 7 \cdot 13$   & $8736$  & $\{3,5,7,13\}$ \\
$4725$ & $3^3 \cdot 5^2 \cdot 7$          & $9920$  & $\{3,5,7\}$ \\
$5355$ & $3^2 \cdot 5 \cdot 7 \cdot 17$   & $11232$ & $\{3,5,7,17\}$ \\
$5775$ & $3 \cdot 5^2 \cdot 7 \cdot 11$   & $11904$ & $\{3,5,7,11\}$ \\
$5985$ & $3^2 \cdot 5 \cdot 7 \cdot 19$   & $12480$ & $\{3,5,7,19\}$ \\
$6435$ & $3^2 \cdot 5 \cdot 11 \cdot 13$  & $13104$ & $\{3,5,11,13\}$ \\
$6615$ & $3^3 \cdot 5 \cdot 7^2$          & $13680$ & $\{3,5,7\}$ \\
$6825$ & $3 \cdot 5^2 \cdot 7 \cdot 13$   & $13888$ & $\{3,5,7,13\}$ \\
$7245$ & $3^2 \cdot 5 \cdot 7 \cdot 23$   & $14976$ & $\{3,5,7,23\}$ \\
$7425$ & $3^3 \cdot 5^2 \cdot 11$         & $14880$ & $\{3,5,11\}$ \\
$7875$ & $3^2 \cdot 5^3 \cdot 7$          & $16224$ & $\{3,5,7\}$ \\
$8085$ & $3 \cdot 5 \cdot 7^2 \cdot 11$   & $16416$ & $\{3,5,7,11\}$ \\
$8415$ & $3^2 \cdot 5 \cdot 11 \cdot 17$  & $16848$ & $\{3,5,11,17\}$ \\
$8505$ & $3^5 \cdot 5 \cdot 7$            & $17472$ & $\{3,5,7\}$ \\
$8925$ & $3 \cdot 5^2 \cdot 7 \cdot 17$   & $17856$ & $\{3,5,7,17\}$ \\
$9135$ & $3^2 \cdot 5 \cdot 7 \cdot 29$   & $18720$ & $\{3,5,7,29\}$ \\
$9555$ & $3 \cdot 5 \cdot 7^2 \cdot 13$   & $19152$ & $\{3,5,7,13\}$ \\
$9765$ & $3^2 \cdot 5 \cdot 7 \cdot 31$   & $19968$ & $\{3,5,7,31\}$ \\
\bottomrule
\end{tabular}
\caption{The 23 excluded candidates. Abundance is visible as
$\sigmaone(N) \ge 2N$ throughout; the first unexcluded odd abundant numbers,
$10395$, $12285$, and $17325$, mark the method's verified limit.}
\label{tab:instances}
\end{table}

% ===========================================================================
\section{Module inventory}\label{app:modules}

\begin{center}\small
\begin{tabular}{l l}
\toprule
module & role \\
\midrule
\code{Erdos7/Density.lean} & steps 1--2: density lemma; $2N \le \sigmaone(N)$ targeting \\
\code{Erdos7/AbundancyFloor.lean} & step 3: kernel floor at $945$; $\sigmaone$/abundancy bridging \\
\code{Erdos7/Capacity.lean} & step 4: CRT capacity bound, relaxation, 23 instances, controls \\
\code{Erdos7/Enumeration.lean} & step 5: \code{sigma100}, scanner soundness, the $10^4$ scan; headline \\
\code{Erdos7/Bridge.lean} & $\ZZ \leftrightarrow \ZZ/N\ZZ$ bridge; \code{formal-conjectures} transport; smoke tests \\
\code{Erdos7/AxiomCheck.lean} & the 63-theorem published manifest \\
\code{Erdos7/AxiomAudit.lean} & automated whole-environment axiom audit \\
\code{scripts/axiom\_gate.sh} & the CI gate: manifest + audit, three-axiom containment \\
\bottomrule
\end{tabular}
\end{center}

\end{document}